\def\inst{\end{tabular}\\\begin{tabular}{c}}
\newtheorem{theorem}{Theorem}
\newtheorem{lemma}[theorem]{Lemma}
\newtheorem{proposition}[theorem]{Proposition}
\theoremstyle{definition}
\newtheorem{example}[theorem]{Example}
\theoremstyle:=definition,remark,plain\do{%
        \expandafter\g@addto@macro\csname th@\theoremstyle\endcsname{%
            \addtolength\thm@preskip\parskip
            }%
        }
\renewcommand{\le}{\leqslant}
\renewcommand{\ge}{\geqslant}
\newcommand{\abs}[1]{\lvert#1\rvert}
\newcommand{\eps}{\varepsilon}
\newcommand{\R}{\mathbb R}
\DeclareMathOperator*{\E}{\mathbb{E}}
\newcommand{\states}{\mathbb S}
\newcommand{\statesstar}{\hat{\states}}
\DeclareMathOperator*{\argmax}{arg\,max}
\newcommand{\OPT}{\textsf{OPT}\xspace}
\newcommand{\ADAPT}{\textsf{ADAPT}\xspace}
\newcommand{\NONADAPT}{\textsf{NON\-ADAPT}\xspace}
\newcommand{\AGREEDY}{\textsf{AGREEDY}\xspace}
\newcommand{\NBR}{\textsf{NBR}\xspace}
\newcommand{\OFFLINE}{\textsf{OFF\-LINE}\xspace}
\title{Discrete Stochastic Submodular Maximization:\\Adaptive vs. Non-Adaptive vs. Offline}
\author{Lisa Hellerstein \and Devorah Kletenik \and Patrick Lin
\inst
Department of Computer Science \& Engineering\\
Polytechnic School of Engineering, 
New York University\\
\texttt%
{\{lisa.hellerstein,dkletenik,patrick.lin\}@nyu.edu}
}
\date{}
\begin{document}
\maketitle

\begin{abstract}
We consider the problem of stochastic monotone submodular function maximization, subject to constraints.
We give results on adaptivity gaps, and on the gap between the optimal offline and online solutions. 
We present a procedure that transforms a decision tree (adaptive algorithm) into a non-adaptive chain. We prove that this chain achieves at least $\tau$ times the utility of the decision tree, over a product distribution and binary state space, where $\tau=\min_{i,j} \Pr[x_i=j]$. This proves an adaptivity gap of $\frac{1}{\tau}$ (which is $2$ in the case of a uniform distribution) for the problem of stochastic monotone submodular maximization subject to state-independent constraints. 
For a cardinality constraint, 
we prove that a simple adaptive greedy algorithm achieves an approximation factor of $(1-\frac{1}{e^\tau})$ with respect to the optimal offline solution; previously, it has been proven that the algorithm achieves an approximation factor of $(1-\frac{1}{e})$ with respect to the optimal adaptive online solution. Finally, we show that there exists a non-adaptive solution for the stochastic max coverage problem that is within a factor $(1-\frac{1}{e})$ of the optimal adaptive solution and within a factor of $\tau(1-\frac{1}{e})$ of the optimal offline solution.

\end{abstract}

\section{Introduction}
We consider
stochastic submodular function maximization, subject to constraints.
This problem is
motivated by problems in application areas such as machine learning, 
social networks, and recommendation systems.

In traditional (non-stochastic) submodular function maximization, 
the goal is to find a subset of ``items'' with maximum utility, as 
measured by a submodular utility function assigning a real value
to each possible subset of items.
In {\em stochastic} submodular function maximization,
items have {\em states}.  For example, if each item is a sensor,
the item might be either working or broken.
The utility of a subset of items
depends not only on which items are in the subset, but also on their states. 
The state of each item is initially unknown, and can only be determined by
performing a ``test'' on the item.

Algorithms for stochastic submodular maximization work in an on-line setting,
sequentially choosing which item to test next.
The choice can be adaptive, depending on the outcomes of previous tests.
The state of each item is an independent random variable.
The goal is to maximize the expected utility of the tested items.
Previous work has sought to 
determine the adaptivity gap, which is the ratio between the optimal adaptive and non-adaptive
solutions.  In this paper we present new adaptivity gap results
for discrete monotone submodular functions.
We also consider another type of gap that has not been previously explored
in the context of stochastic submodular maximization: the ratio between the optimal offline solution and the optimal
adaptive solution.

Our main result is an adaptivity gap of 2 for all {\em state-independent} constraints,
when the state set is binary and the item state distribution is uniform.
More generally, for arbitrary product distributions, we prove an adaptivity gap of $\frac{1}{\tau}$.
Here $\tau$ is the minimum value of $p_{i,j}$, where $p_{i,j}$ is the 
probability that item $i$ is in state $j$.
We say that a constraint is state-independent if the restriction on the items tested
does not depend on their states. 
(A constraint requiring testing to stop when an item is found
to be in state 1 is not state-independent.)
A standard knapsack constraint is
state-independent, and this is the first adaptivity gap
for knapsack constraints.
We prove the gap using
a simple, bottom-up procedure that transforms a decision tree (adaptive algorithm) into a single
non-adaptive chain corresponding to a root-leaf path in the tree.


Asadpour and Nazerzadeh 
previously showed an adaptivity gap of $\frac{e}{e-1}$ for a matroid constraint,
using a stronger monotonicity condition than the one we use here (their results also apply to continuous states)~\cite{asadpour2014maximizing}. 
For a cardinality constraint,
we show that the simple adaptive greedy algorithm gives a
$(1-\frac{1}{e^\tau})$-approximation with respect to the optimal \emph{offline} solution, and
that a dependence on $\tau$ in the approximation factor is necessary.

Finally, we consider the discrete stochastic version of the maximum coverage problem, which is
a special case of submodular maximization subject to a cardinality constraint.
We modify an approximation algorithm for the deterministic version
of this problem, due to Ageev and Sviridenko~\cite{ageev1999approximation,ageev2004pipage}, to 
prove that the optimal non-adaptive solution for this problem is within a
factor of ${1-\frac{1}{e}}$ of the optimal adaptive solution.
We also show that the optimal non-adaptive solution for this problem is within a
factor of ${\tau(1-\frac{1}{e})}$ with respect to the optimal offline solution.

\section{Preliminaries and Definitions}
\label{sec:prelims}

Let $\states=\{0,\ldots,\ell-1\}$, and $\statesstar=\states\cup\{*\}$. A partial assignment is a vector
$b \in \statesstar^n$. Hence $b$ can be viewed as an assignment to variables $x_1 ,\ldots , x_n$.
We will use these partial assignments to represent the outcomes of tests 
giving the states of $n$ items, where each item can be in one of $\ell$ states.
We write $b_i=s$ to indicate that item $i$ has been tested and found to be in state $s$,
and $b_i = *$ to indicate that the state of item $i$ is unknown. We assume that the states of
different items are independent.

If $b',b\in\statesstar^n$ and $b'_i=b_i$ for all $b_i\ne *$, then we call $b'$ an \emph{extension} of $b$,
which we will write as $b' \succ b$.
We will use $b_{j\gets s}$ to denote the extension of $b$ setting the $j$-th bit of $b$ to $s$.

As is standard in the literature, given a set $N=\{1,\ldots,n\}$, we say a function $g:2^N\to\R_{\ge 0}$
is a \emph{utility function}. We will use the notation $g_S(j)$ to denote $g(S\cup\{j\})-g(S)$.

We extend the notion of a utility function to the stochastic setting, wherein we have
$g:\statesstar^n\to\R_{\ge 0}$ defined on partial assignments. 
In this case, we will write
$g(S,b):=g(b')$ where $b'$ is a partial assignment consistent with $b$ on all entries $i$ where 
$i\in S$, and $b_i=*$ whenever $i\not\in S$. The notation $g_{S,b}(j)$ will
denote $g(S\cup\{j\},b)-g(S,b)$. If $S'$ is a set, then $g_S(S')$ will mean $g(S\cup S')-g(S)$.

Utility function $g:\statesstar^n\to\R_{\ge 0}$ is called \emph{submodular} if
$g(b_{i\gets s})-g(b)\ge g(b'_{i\gets s})-g(b')$ when $b' \succ b$, $b_i'=b_i=*$,
and $s\in\states$.
We say $g$ is \emph{monotone} if $g(b_{i\gets s})\ge g(b)$ when $b_i=*$.
That is, testing a bit can only increase the utility. 

We will work with product distributions over the vectors $\states^n$: for $i\in N$, $j\in\states$, we use $p_{i,j}$ to mean the probability of the
$i$-th coordinate being $j$ (so that for each $i$, $\sum_j p_{i,j}=1$).
Many of our results are with respect to $\tau = \min_{i, j} p_{i,j}$.
We use $\E[g_{S,b}(j)]$ to denote the expected increase in utility from testing the $j$-th bit.

We define the \emph{Stochastic Submodular Maximization} problem as the problem of 
maximizing a monotone submodular function,
in the stochastic setting with a discrete state space, subject to
one or more constraints.
More specifically, in this problem,
we are given as input a monotone submodular
$g:\statesstar^n\to\R_{\ge 0}$, the constraints, and the parameters of a product distribution over $\statesstar^n$.

Solving a Stochastic Submodular Maximization problem entails finding an \emph{adaptive solution}
that builds a set $Q\subseteq N$ item by item, testing each item after selecting it, that maximizes $\E[g(Q,b)]$ subject to the constraints. 
This is effectively a decision tree, whose nodes are labeled with $j\in N$, and we branch
depending on the outcome of $b_j$. A \emph{chain} is a balanced tree such that the labels are the same for all nodes at the same level.
A chain is a \emph{non-adaptive procedure}. We also consider the so-called \emph{offline solution}, which knows \emph{a priori}
the outcomes of the random bits, and takes the optimal set with respect to said outcome.

A knapsack constraint has the form
$\sum_{j\in Q} c_j \le B$ where each $c_j \ge 0$, and $B \ge 0$.
The $c_j$ are called costs, and $B$ is a budget.
A special case is when $c_j = 1$ for all $j$ and $B$ is an integer;
this is a cardinality constraint.
(A cardinality constraint is also a special case of a matroid constraint.)

The \emph{Stochastic Max Coverage} problem is a special case of the 
Stochastic Submodular Maximization problem with a cardinality constraint.
The utility function $g:\statesstar\to\R_{\ge0}$ in the Stochastic Max Coverage problem
is defined as follows: let $E=\{e_1,\ldots,e_m\}$ be a ground set of elements, and for $i\in N$ and $r\in\states$, $S_{i,r}\subseteq E$. For convenience of notation, we will also write $S_{i,a}$ to mean $S_{i,a_i}$ where $a\in\states^n$. Then $g(S,b)=\abs{\bigcup_{i\in S} S_{i,b}}$. If $b$ is a partial assignment, then we say $e_j$ is \emph{covered} with respect to $b$ if $e_j\in\bigcup_{i:b_i\ne *} S_{i,b}$. This function $g$ is clearly submodular and monotone.

The expected values of the optimal adaptive, non-adaptive, and offline solutions will be denoted by
\ADAPT, \NONADAPT, and \OFFLINE. We are interested in the
\emph{adaptivity gap} $\frac{\ADAPT}{\NONADAPT}$, as well as the ratios $\frac{\OFFLINE}{\ADAPT}$ and $\frac{\OFFLINE}{\NONADAPT}$.

\section{Related Work}
\label{sec:related}

The submodular maximization problems studied in this paper were all initially studied in the
deterministic setting.
Feige showed that for all of these problems, under the assumption of $\mathrm{P}\ne\mathrm{NP}$, no polynomial time algorithm can achieve an approximation factor better than $(1-\frac{1}{e})$~\cite{feige1998threshold}.
For the problem of maximizing a monotone submodular function subject to a cardinality constraint,
Nemhauser et al. showed that the natural greedy algorithm achieves an approximation factor of $(1-\frac{1}{e})$~\cite{nemhauser1978best}.
For the problem with a knapsack constraint,
Sviridenko subsequently showed that an algorithm of Khuller et al. also achieves an approximation factor of $(1-\frac{1}{e})$
~\cite{khuller1999budgeted,sviridenko2004note}. The results of Golovin and Krause achieve the same approximation factor for a cardinality constraint in the stochastic setting~\cite{golovin2011adaptive}.

As mentioned earlier, Asadpour and Nazerzadeh showed an adaptivity
gap of $\frac{e}{e-1}$ for Stochastic Submodular Maximization with a matroid constraint,
using a stronger definition of monotonicity.
In their definition of monotonicity,
for any partial assignment $b$ and $s\in\states$, they require that
$g(b_{i \gets s})\ge g(b)$ if either $b_i=*$ or $s \ge b_i$, whereas we only require that
$g(b_{i \gets s}) \ge g(b)$ if $b_i = *$.
Their proof is based on
Poisson clocks and pipage rounding~\cite{asadpour2014maximizing}, and applies to continuous state spaces.
Our adaptivity gap results do not apply to continuous state spaces, but our proofs are
combinatorial.

Chan and Farias studied the related Stochastic Depletion problem and gave a $\frac{1}{2}$-approximation for the problem with respect to
what they call the offline solution in their model~\cite{chan2009stochastic}; in our model, their algorithm translates to a $\frac{1}{2}$-approximation with respect to \ADAPT for Stochastic Submodular Maximization with a cardinality constraint.

\begin{table}[ht]
\centering
{\renewcommand{\arraystretch}{1.25}
\begin{tabular}{|l|c|c|c|} \hline
	&	Knapsack Constraint
	&	Cardinality Constraint
	&	Max Coverage
	\\\hline
Deterministic
	&	$(1-\frac{1}{e}) \OPT$~\cite{sviridenko2004note}
	&	$(1-\frac{1}{e})\OPT$~\cite{nemhauser1978analysis}
	&	$(1-\frac{1}{e}) \OPT$~\cite{khuller1999budgeted}
	\\\hline
\parbox[c]{2cm}{\raggedright Stochastic: Adaptive}
	&	OPEN
	&	\begin{tabular}{c}
		$(1-\frac{1}{e})\ADAPT$~\cite{golovin2011adaptive,asadpour2014maximizing} \\ 
		$\mathbf{(1-\frac{1}{e^\tau})}\textbf{\OFFLINE}$ \\
		\end{tabular}
	&	\begin{tabular}{c}
		$(1-\frac{1}{e})\ADAPT$~\cite{golovin2011adaptive,asadpour2014maximizing} \\
		$\mathbf{(1-\frac{1}{e^\tau})}\textbf{\OFFLINE}$
		\end{tabular}
	\\\hline
\parbox[c]{2cm}{\raggedright Stochastic: Non-Adaptive}
	&	$\boldsymbol{\tau}\textbf{\ADAPT}$
	&	\begin{tabular}{c}
		$(1-\frac{1}{e})\ADAPT^\dagger$~\cite{asadpour2014maximizing}\\ 
		$\boldsymbol{\tau}\textbf{\ADAPT}$
		\end{tabular}
	&	\begin{tabular}{l}
		$(1-\frac{1}{e})\ADAPT^\dagger$~\cite{asadpour2014maximizing}\\ 
		$\mathbf{(1-\frac{1}{e})}\textbf{\ADAPT}$\\
		$\boldsymbol{\tau}\mathbf{(1-\frac{1}{e})}\textbf{\OFFLINE}$
		\end{tabular}
	\\\hline
\end{tabular}
}
\caption{Bounds for Stochastic Submodular Maximization}
\label{tbl:bounds}
\end{table}

Table~\ref{tbl:bounds} summarizes approximation bounds for stochastic discrete monotone submodular function maximization
with a knapsack constraint, a cardinality constraint, and for max-coverage.
Results from this paper are in bold. We denote with a $\dagger$ bounds relying on the stronger definition of monotonicity of~\cite{asadpour2014maximizing}.

The entries in the first row give the best bounds known for polynomial-time algorithms solving the deterministic versions of the problems, assuming oracle access to the utility function $g$.  The bounds are given in terms of \OPT, the optimal solution to the deterministic problem. As noted above, these are the best bounds possible, assuming $\mathrm{P}\ne\mathrm{NP}$~\cite{feige1998threshold}.

The entries in the second row refer to the best nontrivial bounds achieved by polynomial-time algorithms for the stochastic versions of the problems, assuming polynomial-time access to $g$.   Both bounds are achieved by the Adaptive Greedy algorithm (described in Section~\ref{sec:cardinalityADAPTOFFLINE}). We note that Golovin and Krause give a randomized version of the algorithm for the knapsack constraint achieving an approximation factor of $(1-\frac{1}{e})$, but with the relaxation that the budget only needs to be met in expectation~\cite{golovin2011adaptive}.
The last row refers to the bounds achieved by the respective best possible non-adaptive solutions for the problems in the stochastic setting (irrespective of running time).

\section{An Adaptivity Gap for State-Independent Constraints}
\label{sec:conquest}

In this section we present an adaptivity gap for Stochastic Submodular Maximization with state-independent constraints. We use a technique that takes a decision tree and outputs a root-leaf path by collapsing the tree bottom up in a greedy manner; at each step, one child chain of a node replaces the other, leaving a single longer chain.

We show that under a product distribution over $\{0,1\}^n$, this gives a non-adaptive procedure that is a $\tau$-approximation of expected utility of the original tree, where $\tau=\min_{i,j} p_{i,j}$. This gives a bound on the adaptivity gap for the problem: $\frac{\ADAPT}{\NONADAPT}\le \frac{1}{\tau}$ for binary states.

\begin{theorem}
For binary states,
the Stochastic Submodular Maximization problem with state-independent constraints
has an adaptivity gap of at most $\frac{1}{\tau}$.
\end{theorem}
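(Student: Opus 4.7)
The proof proceeds by bottom-up induction on the structure of the tree, with inductive hypothesis stating that for every sub-decision-tree $T'$ of $T$, the chain $C_{T'}$ produced by the procedure satisfies $\E[g(Q(C_{T'}),b)] \ge \tau \cdot \E[g(Q_{T'}(b),b)]$, where $Q_{T'}(b)$ denotes the set tested by $T'$ under state vector $b$. The base case of a single leaf is immediate, since both the tree and the chain test nothing.

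For the inductive step at a node $v$ with label $j\in N$ and child subtrees $T^0,T^1$ (already collapsed to chains $C^0,C^1$ by the IH), the procedure picks $C_v\in\{(j,C^0),(j,C^1)\}$ with higher expected utility. I would decompose both quantities by conditioning on $b_j\in\{0,1\}$. Writing $\alpha^s := \E[g(\{j\}\cup Q_{T^s}(b),b)\mid b_j=s]$ and $f^{s'}(S) := \E[g(S,b)\mid b_j=s']$, we have
\[
U(T_v) = p_{j,0}\,\alpha^0 + p_{j,1}\,\alpha^1,\quad U((j,C^s)) = p_{j,0}\,f^0(\{j\}\cup Q(C^s)) + p_{j,1}\,f^1(\{j\}\cup Q(C^s)).
\]
To relate each $f^{s'}(\{j\}\cup Q(C^s))$ to the appropriate tree quantity, I would apply the IH on $T^s$ with the auxiliary utility $g^{j,s'}(S,b):=g(S\cup\{j\},b_{j\gets s'})$, which remains monotone submodular in $S$. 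This step requires the collapse procedure to be specified in a manner independent of the utility function (for instance, based on tree structure and probabilities alone), so that the same chain $C^s$ is produced under every monotone submodular utility.

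The final step combines the four inequalities obtained this way with the elementary fact that $\tau \le \min(p_{j,0},p_{j,1}) \le 1/2$ (the latter since the two probabilities sum to one), which yields $\max(x,y) \ge \tau(x+y)$ for non-negative $x,y$. Together with non-negativity of the ``cross'' terms $f^{s'}(\{j\}\cup Q(C^s))$ for $s\ne s'$ (following from $g\ge 0$ and monotonicity), this produces $\max(U((j,C^0)),U((j,C^1))) \ge \tau\cdot U(T_v)$, closing the induction and giving the claimed adaptivity gap of $1/\tau$.

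The main obstacle is avoiding the compounding of the $\tau$-factor across levels of the tree. A careless application of the IH at each of $d$ levels would yield a bound of $\tau^d$ rather than $\tau$, which is drastically weaker. Avoiding this requires a careful interplay between the two candidate chains in the greedy step, each providing bounds not only on its ``natural'' $b_j$-branch but also on the ``wrong'' branch via the auxiliary utility $g^{j,s'}$ with $s\ne s'$, so that the greedy maximum absorbs the losses at every level into a single overall factor of $\tau$. This delicate accounting, which essentially relies on the binary state space and the product structure of the distribution, is the heart of the argument and the reason the procedure must be truly greedy rather than, say, following the most probable child at each node.
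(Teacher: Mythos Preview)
Your proposal contains a genuine gap, and it is exactly the compounding issue you flag in your last paragraph: the sketch does not actually avoid it. The inductive hypothesis you state is multiplicative, $U(C_{T'})\ge\tau\,U(T')$, so each application of the IH to a child chain $C^s$ already costs a factor $\tau$. Your ``four inequalities'' therefore give at best
\[
U((j,C^0))+U((j,C^1))\ \ge\ \tau\bigl(p_{j,0}\alpha^0+p_{j,1}\alpha^1\bigr)=\tau\,U(T_v),
\]
and then applying $\max(x,y)\ge\tau(x+y)$ yields only $\max\ge\tau^2\,U(T_v)$, compounding to $\tau^d$. The proposed fix via auxiliary utilities $g^{j,s'}$ is circular: you need the \emph{same} chain $C^s$ to satisfy the IH for every such utility, which (as you note) forces the collapse procedure to be utility-independent; but your final paragraph insists the procedure must be ``truly greedy,'' i.e.\ utility-dependent. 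In fact no single root--leaf path can achieve the $\tau$-fraction simultaneously for all monotone submodular utilities (take a two-leaf tree and utilities that reward only the left, respectively only the right, item), so the uniform-over-utilities strengthening of the IH is simply false.

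The paper's argument is structurally different and sidesteps the compounding entirely by using an \emph{additive} per-node accounting rather than a multiplicative induction. At the moment a node $x_i$ with two chain children is collapsed, the paper bounds the \emph{loss} $\Delta$ in expected utility by $(1-\tau)\,U_i$, where $U_i$ is the expected marginal contribution of the root item $x_i$ alone. Summing these losses over all nodes (weighted by reaching probability) gives total loss at most $(1-\tau)$ times the tree's expected utility, hence the chain retains a $\tau$-fraction. The inequality $\Delta\le(1-\tau)U_i$ is where submodularity enters, via the bounds $U_i^1\ge U_L^0-U_L^1$ and $U_i^0\ge U_R^1-U_R^0$ obtained by combining monotonicity with the diminishing-returns property; your sketch uses only monotonicity and non-negativity, which is not enough. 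The greedy choice in the paper is between $T_L$ and $T_R$ based on which incurs smaller loss (equivalently, the sign of $p_{i,0}(U_L^0-U_R^0)-p_{i,1}(U_R^1-U_L^1)$), and this choice is what lets the single factor $(1-\tau)$ be charged to $U_i$ rather than to the whole subtree.
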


\begin{proof}
Let $T$ be a decision tree corresponding to a solution to an instance of Stochastic
Submodular Maximization with state-independent constraints, and binary states.
We show that if $T$ achieves expected utility $U$, then there exists a chain, 
corresponding to a root-leaf path in $T$,
that achieves expected utility $\tau\cdot U$.
Since the constraints are state-independent, this root-leaf path must obey
the constraints, and the theorem follows.

We use a recursive procedure to turn $T$ into a chain. At any intermediate step, we have a subtree consisting of a parent node whose child subtrees are chains. We show that when performing the procedure on this subtree, the loss incurred in expected utility is at most $1-\tau$ times the expected utility contributed by the parent node. Since the expected utility of a decision tree is a weighted sum of the expected utility contributed by its nodes, the total loss from the entire procedure is at most $1-\tau$ times the expected utility of the whole tree.

Hence without loss of generality, suppose $T$ is a tree with a root node labeled $x_i$ for some $i\in N$ whose two child subtrees are chains, as shown in Figure~\ref{fig:T}.
For convenience of notation, assume the nodes on the left child chain are labeled $x_{l_1}$ through $x_{l_c}$, and the nodes on the right child chain are labeled $x_{r_1}$ through $x_{r_d}$.

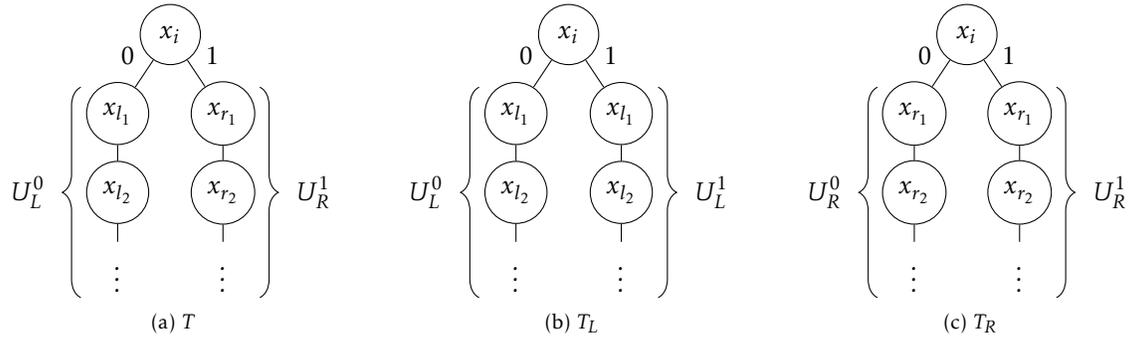
\begin{figure}[ht]
\centering
\subfloat[$T$\label{fig:T}]{%
\begin{tikzpicture}[scale=0.7]
\node[draw,circle,minimum width=0.8cm] (x0) at (0,0) {$x_i$};
\node[draw,circle,minimum width=0.8cm] (x1L) at (-1,-1.5) {$x_{l_1}$};
\node[draw,circle,minimum width=0.8cm] (x2L) at (-1,-3) {$x_{l_2}$};
\node (x3L) at (-1,-4.5) {$\vdots$};
\node[draw,circle,minimum width=0.8cm] (x1R) at (1,-1.5) {$x_{r_1}$};
\node[draw,circle,minimum width=0.8cm] (x2R) at (1,-3) {$x_{r_2}$};
\node (x3R) at (1,-4.5) {$\vdots$};
\draw (x0) to node[above left]{$0$} (x1L);
\draw (x0) to node[above right]{$1$} (x1R);
\draw (x1L) to (x2L);
\draw (x2L) to (x3L);
\draw (x1R) to (x2R);
\draw (x2R) to (x3R);
\draw[decorate,decoration={brace,mirror,amplitude=7pt}] (-1.7,-1) to node[left=1em] {$U_L^0$} (-1.7,-5);
\draw[decorate,decoration={brace,amplitude=7pt}] (1.7,-1) to node[right=1em] {$U_R^1$} (1.7,-5);
\end{tikzpicture}
}
\qquad
\subfloat[$T_L$\label{fig:Tleft}]{%
\begin{tikzpicture}[scale=0.7]
\node[draw,circle,minimum width=0.8cm] (x0) at (0,0) {$x_i$};
\node[draw,circle,minimum width=0.8cm] (x1L) at (-1,-1.5) {$x_{l_1}$};
\node[draw,circle,minimum width=0.8cm] (x2L) at (-1,-3) {$x_{l_2}$};
\node (x3L) at (-1,-4.5) {$\vdots$};
\node[draw,circle,minimum width=0.8cm] (x1R) at (1,-1.5) {$x_{l_1}$};
\node[draw,circle,minimum width=0.8cm] (x2R) at (1,-3) {$x_{l_2}$};
\node (x3R) at (1,-4.5) {$\vdots$};
\draw (x0) to node[above left]{$0$} (x1L);
\draw (x0) to node[above right]{$1$} (x1R);
\draw (x1L) to (x2L);
\draw (x2L) to (x3L);
\draw (x1R) to (x2R);
\draw (x2R) to (x3R);
\draw[decorate,decoration={brace,mirror,amplitude=7pt}] (-1.7,-1) to node[left=1em] {$U_L^0$} (-1.7,-5);
\draw[decorate,decoration={brace,amplitude=7pt}] (1.7,-1) to node[right=1em] {$U_L^1$} (1.7,-5);
\end{tikzpicture}
}
\qquad
\subfloat[$T_R$\label{fig:Tright}]{%
\begin{tikzpicture}[scale=0.7]
\node[draw,circle,minimum width=0.8cm] (x0) at (0,0) {$x_i$};
\node[draw,circle,minimum width=0.8cm] (x1L) at (-1,-1.5) {$x_{r_1}$};
\node[draw,circle,minimum width=0.8cm] (x2L) at (-1,-3) {$x_{r_2}$};
\node (x3L) at (-1,-4.5) {$\vdots$};
\node[draw,circle,minimum width=0.8cm] (x1R) at (1,-1.5) {$x_{r_1}$};
\node[draw,circle,minimum width=0.8cm] (x2R) at (1,-3) {$x_{r_2}$};
\node (x3R) at (1,-4.5) {$\vdots$};
\draw (x0) to node[above left]{$0$} (x1L);
\draw (x0) to node[above right]{$1$} (x1R);
\draw (x1L) to (x2L);
\draw (x2L) to (x3L);
\draw (x1R) to (x2R);
\draw (x2R) to (x3R);
\draw[decorate,decoration={brace,mirror,amplitude=7pt}] (-1.7,-1) to node[left=1em] {$U_R^0$} (-1.7,-5);
\draw[decorate,decoration={brace,amplitude=7pt}] (1.7,-1) to node[right=1em] {$U_R^1$} (1.7,-5);
\end{tikzpicture}
}
\caption{$T$ and its variants}
\end{figure}

Let $L =\{l_m\}_{m=1}^c$ and $R=\{r_m\}_{m=1}^d$.
For $D\in\{L,R\}$ and fixed assignment $b$, monotonicity gives $g_{\varnothing,b}(\{i\}\cup D)\ge g_{\varnothing,b}(D)$,
and submodularity gives $g_{\varnothing,b}(D)\ge g_{\{i\},b}(D)$. Thus, in expectation,
\begin{equation}\label{eq:u0ineq0}
\E[g_{\varnothing,b}(\{i\}\cup L)\mid b_i=1]\ge\E[g_{\varnothing,b}(L)\mid b_i=1]
=\E[g_{\varnothing,b}(L)\mid b_i=0]\ge\E[g_{\{i\},b}(L)\mid b_i=0]
\end{equation}
and similarly
\begin{equation}\label{eq:u1ineq0}
\E[g_{\varnothing,b}(\{i\}\cup R)\mid b_i=0] 
\ge\E[g_{\{i\},b}(R)\mid b_i=1].
\end{equation}

We introduce the following notation:
Set $U_i^0=\E[g_{\varnothing,b}(i)\mid b_i=0]$, $U_i^1=\E[g_{\varnothing,b}(i)\mid b_i=1]$ (note that $g_{\varnothing,b}(i)$ is
actually constant on all $b$ with the same value for $b_i$), and $U_i=\E[g_{\varnothing,b}(i)]=p_{i,0}U_i^0+p_{i,1}U_i^1$. So $U_i$ is the expected utility contributed by the root node $x_i$.

Set $U_L^0=\E[g_{\{i\},b}(L)\mid b_i=0]$, that is, the expected increase in utility from testing $x_{l_1},\ldots,x_{l_c}$
after testing $x_i$ and getting $b_i=0$. Similarly, we set $U_L^1=\E[g_{\{i\},b}(L)\mid b_i=1]$, $U_R^0=\E[g_{\{i\},b}(R)\mid b_i=0]$
and $U_R^1=\E[g_{\{i\},b}(R)\mid b_i=1]$.

With this notation in mind, we can rewrite, respectively, \eqref{eq:u0ineq0} and \eqref{eq:u1ineq0} as
$U_i^1 \ge U_L^0-U_L^1$ and $U_i^0 \ge U_R^1-U_R^0$,
that is,
\begin{equation}
U_i \ge p_{i,0}(U_R^1-U_R^0) + p_{i,1}(U_L^0-U_L^1). \label{eq:uineq}
\end{equation}

Consider the two variants $T_L$ and $T_R$ of $T$ as follows: $T_L$ is the result of replacing the left child chain of $T$ with the right child chain, i.e, both the left and right child chains are labeled $x_{l_1}$ through $x_{l_d}$; $T_R$ is instead the result of replacing the right child chain with the left. $T_L$ and $T_R$ are shown in Figures~\ref{fig:Tleft} and \ref{fig:Tright}, respectively. Note that they are both fully non-adaptive, and thus can just as easily be represented by pure chains.

Without loss of generality, we will assume
\begin{equation}
p_{i,0}(U_L^0-U_R^0) \le p_{i,1}(U_R^1-U_L^1)
\label{eq:assump}
\end{equation}
and substitute $T_R$ for $T$ by replacing the left child chain of $T$ with its
right child chain (in the symmetric setting, we have $p_{i,0}(U_L^0-U_R^0) \ge p_{i,1}(U_R^1-U_L^1)$ and substitute $T_L$ for $T$). Set
\[ S(b)=\begin{cases} \{i\}\cup L & \text{when $b_i=0$} \\ \{i\}\cup R & \text{when $b_i=1$} \end{cases}. \]

Let $\Delta$ be the expected loss in utility. We want to show that $(1 - \tau)U_i \ge \Delta$. We compute $\Delta$:
\begin{align*}
\Delta & = \E[g(S(b),b)]-\E[g(\{i\}\cup R,b)] \\
		& = [p_{i,0}(U_i^0+U_L^0)+p_{i,1}(U_i^1+U_R^1)] 
		- [p_{i,0}(U_i^0+U_R^0)+p_{i,1}(U_i^1+U_R^1)] \\
		& = p_{i,0}(U_L^0-U_R^0).
\end{align*}

There are two cases: (i) $p_{i,0}\le p_{i,1}$ and (ii) $p_{i,0}> p_{i,1}$.

In case (i) we show that $\frac{1}{p_{i,1}}\Delta\le U_i$, from which it follows that $\Delta \le (1-\tau) U_i$ since $\tau \le p_{i,0} = 1 - p_{i,1}$. We have
\begin{align*}
\frac{1}{p_{i,1}}\Delta &= \left(1+\frac{p_{i,0}}{p_{i,1}}\right)\Delta \\
				&\le \Delta + p_{i,0}(U_R^1-U_L^1) \\
				&= p_{i,0}[(U_L^0-U_R^0)+(U_R^1-U_L^1)] \\
				&= p_{i,0}[(U_L^0-U_L^1)+(U_R^1-U_R^0)] \\
				&\le p_{i,0}(U_L^0-U_L^1)+p_{i,1}(U_R^1-U_R^0) \\
				&\le U_i
\end{align*}
where the first inequality follows by \eqref{eq:assump}, the second inequality since $p_{i,0}\le p_{i,1}$, and the last inequality from \eqref{eq:uineq}.

In case (ii) we show instead that $\frac{1}{p_{i,0}}\Delta\le U_i$, from which again we have $\Delta \le (1-\tau) U_i$ since $\tau \le p_{i,1} = 1 - p_{i,0}$. The computation is similar:
\begin{align*}
\frac{1}{p_{i,0}}\Delta &= \left(1+\frac{p_{i,1}}{p_{i,0}}\right)\Delta \\
				&\le \Delta+p_{i,1}(U_L^0-U_R^0) \\
				&\le p_{i,1}[(U_R^1-U_L^1)+(U_L^0-U_R^0)] \\
				&= p_{i,1}[(U_L^0-U_L^1)+(U_R^1-U_R^0)] \\
				&< p_{i,0}(U_L^0-U_L^1)+p_{i,1}(U_R^1-U_R^0).
\end{align*}
This completes the proof.
\end{proof}


\section{The Gap Between \ADAPT and \OFFLINE}\label{sec:cardinalityADAPTOFFLINE}

In this section, we consider the gap $\frac{\OFFLINE}{\ADAPT}$ for Stochastic Submodular Maximization with a cardinality constraint.

We first define some notation. We use $k$ to denote the number of items allowed by the cardinality constraint.
For consistency, any variant of $b$ will refer to a partial assignment,
and any variant of $a$ will refer to only a full assignment, that is, $a\in\states^n$.
For the sake of clarity, in this section we will explicitly specify the assignments over which we are taking expectations,
except with the shorthand that
when $b$ is a fixed partial assignment, $\E_{a\succ b}[\,\cdot\,]$ will mean $\E_a[\,\cdot\mid a\succ b]$.
If we have a sequence $\{S^t\}_{t=0}^k$, we write $g_t(j)$ in place of $g_{S^t}(j)$.

We use \AGREEDY to denote the expected utility of the \emph{Adaptive Greedy} algorithm of Golovin and Krause~\cite{golovin2011adaptive} (also called \emph{Adaptive Myopic} by Asadpour and Nazerzadeh~\cite{asadpour2014maximizing}), which, starting with $Q^0=\varnothing$, at each step $t$, based on the partial assignment $b^{t-1}$ of bits tested so far, adaptively picks $i_t$ satisfying
\[ i_t=\argmax_{i\in N\setminus Q^{t-1}} \E_{a\succ b^{t-1}}\left[g_{t-1,a}(i)\right], \]
sets $Q^t=Q^{t-1}\cup\{i_t\}$, and tests $b_{i_t}$ to get $b^t$. This implicitly forms a decision tree of depth $k$
that branches based on the outcome of $b_{i_t}$, and outputs $Q^k$.

It is clear that $\AGREEDY\le\ADAPT\le\OFFLINE$. We show:

\begin{theorem}\label{thm:1-1/etau}
$\AGREEDY\ge(1-\frac{1}{e^\tau})\cdot\OFFLINE$ for Stochastic Submodular Maximization with a cardinality constraint.
\end{theorem}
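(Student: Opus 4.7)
The plan is to adapt the classical Nemhauser-Wolsey-Fisher analysis of the greedy algorithm to the stochastic setting, comparing \AGREEDY to \OFFLINE rather than to \ADAPT. Fix a step $t$ and condition on the partial assignment $b^{t-1}$ (and hence the set $Q^{t-1}$) seen by \AGREEDY so far, and let $Q^*(a) = \argmax_{\abs Q \le k} g(Q, a)$ denote the offline optimum for realization $a$. A direct consequence of the paper's definitions, applied to full assignments, is that for each fixed $a$ the set function $S \mapsto g(S, a)$ is monotone submodular. Thus, pointwise in $a$,
\[
g(Q^*(a), a) - g(Q^{t-1}, a) \;\le\; \sum_{i \in Q^*(a)} g_{t-1, a}(i) \;\le\; k \cdot \max_{i \notin Q^{t-1}} g_{t-1, a}(i),
\]
using submodularity for the first inequality and $\abs{Q^*(a)} \le k$ for the second.

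The crucial step is to bound the expected max marginal gain by \AGREEDY's chosen expected marginal gain, losing only a factor of $1/\tau$. Observe that for $i \notin Q^{t-1}$, conditioning on $b^{t-1}$ fixes all coordinates of $a$ in $Q^{t-1}$, and $g_{t-1, a}(i)$ inspects only those coordinates together with $a_i$; hence $g_{t-1, a}(i)$ is a function of $a_i$ alone. Writing $\varphi_i(s)$ for this value when $a_i = s$, and using independence of the $a_i$'s,
\[
\E_{a \succ b^{t-1}}\!\left[\max_{i \notin Q^{t-1}} \varphi_i(a_i)\right] \;\le\; \max_{i, s} \varphi_i(s) \;\le\; \frac{1}{\tau}\, \max_i \E_{a \succ b^{t-1}}[\varphi_i(a_i)] \;=\; \frac{1}{\tau}\, \E_{a \succ b^{t-1}}[g_{t-1, a}(i_t)],
\]
where the middle inequality uses $\E[\varphi_{i^*}(a_{i^*})] \ge p_{i^*, s^*}\, \varphi_{i^*}(s^*) \ge \tau\, \varphi_{i^*}(s^*)$ for the pair $(i^*, s^*)$ maximizing $\varphi_i(s)$, and the final equality is the defining property of the \AGREEDY choice $i_t$.

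Combining the two displayed bounds and taking an outer expectation over $b^{t-1}$ yields $\OFFLINE - \alpha_{t-1} \le \frac{k}{\tau}(\alpha_t - \alpha_{t-1})$, where $\alpha_t := \E_a[g(Q^t, a)]$; rearranging gives $\OFFLINE - \alpha_t \le (1 - \tau/k)(\OFFLINE - \alpha_{t-1})$. Iterating from $\alpha_0 \ge 0$ and using $(1 - \tau/k)^k \le e^{-\tau}$ gives $\OFFLINE - \alpha_k \le e^{-\tau}\, \OFFLINE$, i.e., $\AGREEDY = \alpha_k \ge (1 - 1/e^\tau)\, \OFFLINE$. The main obstacle is the middle inequality in the second display: this is the only place $\tau$ enters, and it relies on the decoupling observation that, conditional on $b^{t-1}$, the marginal gain of each unselected item depends only on that item's own state, so the worst case over all $(i, s)$ pairs differs from \AGREEDY's expected marginal gain by at most a factor $1/\tau$.
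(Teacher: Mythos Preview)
Your proposal is correct and follows essentially the same route as the paper: both establish the key inequality $\OFFLINE - \alpha_{t-1} \le \frac{k}{\tau}(\alpha_t - \alpha_{t-1})$ (the paper states this as Lemma~\ref{lem:Wolsey3.5OPT}) via the same decoupling observation that, conditional on $b^{t-1}$, the marginal $g_{t-1,a}(i)$ depends only on $a_i$, and then use $p_{i,s}\ge\tau$ to pass from a realized marginal to an expected one. The only cosmetic differences are that the paper bounds each $g_{t-1,a'}(j)$ term-by-term rather than going through $\max_{i,s}\varphi_i(s)$, and then invokes Wolsey's inequality (Lemma~\ref{lem:1-1/eTech}) instead of your direct $(1-\tau/k)^k$ iteration; both yield the identical $(1-1/e^\tau)$ bound.
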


In other words, $\frac{\OFFLINE}{\ADAPT}\le\frac{e^\tau}{e^\tau-1}$. Although this is bound is weak for small $\tau$, we observe that some dependence on $\tau$ is unfortunately unavoidable:

\begin{proposition}\label{thm:adapttau}
\ADAPT cannot achieve an approximation bound better than $\tau$ relative to \OFFLINE.
\end{proposition}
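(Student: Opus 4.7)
The plan is to exhibit, for every $\eps > 0$, a Stochastic Max Coverage instance (which is a special case of Stochastic Submodular Maximization with a cardinality constraint) on which $\ADAPT \le (\tau+\eps)\OFFLINE$. I would use binary states with $p_{i,1}=\tau$ and $p_{i,0}=1-\tau$ for every item $i$ (so automatically $\tau \le 1/2$ and $\min_{i,j} p_{i,j} = \tau$). Take $n$ items and $n$ ground elements, set $S_{i,1}=\{e_i\}$ and $S_{i,0}=\varnothing$, and use cardinality bound $k=1$; the number of items $n$ is a parameter that I will eventually send to infinity.

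With this setup, the two quantities are easy to compute. For $\ADAPT$, with $k=1$ the algorithm selects a single item $i$ (no prior observations are available) and collects utility $1$ if $b_i=1$ and $0$ otherwise; by symmetry every (possibly randomized) choice yields expected utility exactly $\tau$, so $\ADAPT = \tau$. For $\OFFLINE$, the offline player sees $b$ in advance and picks any item with $b_i=1$ whenever one exists, giving $\OFFLINE = 1-(1-\tau)^n$. Thus the ratio $\ADAPT/\OFFLINE = \tau/(1-(1-\tau)^n)$ tends to $\tau$ as $n\to\infty$, which proves the claim.

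The main thing to verify carefully is that the construction is a bona fide Stochastic Submodular Maximization instance in the sense of Section~\ref{sec:prelims}; this is automatic since coverage utilities are monotone and submodular, the constraint is a cardinality constraint, and the item states are independent as required. If one wants the impossibility to hold for every $k \ge 1$ rather than just $k=1$, the construction generalizes: take $n \gg k/\tau$ and apply a Chernoff bound to the binomial count $|\{i : b_i=1\}|$ to conclude $\OFFLINE \ge (1-o(1))k$, while a linearity-of-expectation argument along the adaptive decision tree (using that the state $b_{I_t}$ of the $t$-th chosen item is independent of the history $b_{I_1},\ldots,b_{I_{t-1}}$, so $\Pr[b_{I_t}=1]=\tau$ on every branch) gives $\ADAPT = k\tau$. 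Either way the ratio tends to $\tau$, completing the argument.
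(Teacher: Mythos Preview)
Your argument is correct and takes a somewhat different route from the paper's. The paper builds an \emph{asymmetric} instance: one ``safe'' item that deterministically yields utility $1$, together with $t^2$ ``risky'' items (where $t=1/\tau$) that each yield utility $t-\eps$ when their state is $1$ and $0$ otherwise; with $k=1$, \ADAPT is forced to take the safe item while \OFFLINE almost always finds a successful risky one, giving a ratio close to $1/(t-\eps)\approx\tau$. Your construction is fully symmetric, which makes the computation of \ADAPT immediate (no need to argue which item the adaptive optimum selects), and it lies inside Stochastic Max Coverage rather than general additive utilities, so it establishes the gap already for that subclass. It also yields the limiting ratio $\tau$ for every fixed $\tau\le 1/2$ by sending $n\to\infty$, whereas the paper's single instance only approaches the bound as $\tau\to 0$. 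The $k>1$ extension via the Chernoff bound and the branch-wise independence argument is a clean addition not present in the paper.
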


The proof of Proposition~\ref{thm:adapttau} is by example, and is given in the appendix. We will note, however, that $\tau$ and $1-\frac{1}{e^\tau}$ are close: for $0<\tau\le\frac{1}{2}$ the difference between $\tau$ and $1-\frac{1}{e^\tau}$ is at most ${\sim}0.107$, which is achieved at $\tau=\frac{1}{2}$.

We now prove Theorem~\ref{thm:1-1/etau}. We use the following lemma due to Wolsey:

\begin{lemma}[\cite{wolsey1982maximising}]
Let $k$ be a positive integer, and $s>0$, $\rho_1,...,\rho_k\ge0$ be reals. Then
\[ \frac{\sum_{i=1}^k \rho_i}{\min_{t\in\{1,\ldots,k\}}\left(s\rho_t + \sum_{i=1}^{t-1} \rho_i\right)} \ge 1-\left(1-\frac{1}{s}\right)^k\ge 1-\frac{1}{e^{k/s}}. \]
\label{lem:1-1/eTech}
\end{lemma}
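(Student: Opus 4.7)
The plan is to separate the chained inequality into two parts. The right-hand bound $1-(1-\tfrac{1}{s})^k \ge 1-e^{-k/s}$ is equivalent to $(1-\tfrac{1}{s})^k \le e^{-k/s}$, which follows from the elementary inequality $1-x \le e^{-x}$ applied at $x = 1/s$ and raised to the $k$-th power; both sides are nonnegative in the regime of interest $s \ge 1$, so raising to a positive integer power preserves the direction.

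For the first (substantive) inequality, I would abbreviate $P_t = \sum_{i=1}^t \rho_i$ with $P_0 = 0$, and let $M = \min_{t \in \{1,\ldots,k\}} (s\rho_t + P_{t-1})$. The goal is then to show $P_k \ge M \bigl(1 - (1-\tfrac{1}{s})^k\bigr)$. By definition of $M$, for every $t$ we have $s\rho_t + P_{t-1} \ge M$; substituting $\rho_t = P_t - P_{t-1}$ and rearranging gives the one-step linear recurrence
\[
P_t \;\ge\; \tfrac{M}{s} + \bigl(1-\tfrac{1}{s}\bigr)\, P_{t-1}.
\]

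I would then prove $P_t \ge M\bigl(1-(1-\tfrac{1}{s})^t\bigr)$ by induction on $t$. The base case $P_0 = 0$ matches the right-hand side exactly. For the inductive step, since $1-\tfrac{1}{s} \ge 0$ (using $s \ge 1$), the inductive hypothesis can be plugged into the recurrence while preserving the inequality direction, and a short calculation collapses $\tfrac{M}{s} + (1-\tfrac{1}{s})\,M\bigl(1-(1-\tfrac{1}{s})^{t-1}\bigr)$ to $M\bigl(1-(1-\tfrac{1}{s})^t\bigr)$. Setting $t=k$ then yields the claim.

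The only delicate point is that the inductive substitution requires $1-\tfrac{1}{s} \ge 0$, which ties the argument to the regime $s \ge 1$; this covers the intended application (where $s$ will be taken to be the cardinality bound $k$). Apart from this caveat, the proof is essentially a one-line telescoping of a linear recurrence, so I do not foresee any genuine obstacle beyond carefully tracking the role of $s \ge 1$ in the inductive step.
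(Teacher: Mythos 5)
The paper does not actually prove this lemma: it is imported verbatim from Wolsey's 1982 paper as a black box, so there is no in-paper argument to compare against. Your self-contained proof is correct and is the standard one. Writing the min-condition as the recurrence $P_t \ge \frac{M}{s} + (1-\frac{1}{s})P_{t-1}$ and inducting to get $P_k \ge M\bigl(1-(1-\frac{1}{s})^k\bigr)$ is exactly the right telescoping argument, and the tail bound $(1-\frac{1}{s})^k \le e^{-k/s}$ via $1-x \le e^{-x}$ is fine. Your caveat about needing $s \ge 1$ is not merely cosmetic: with only $s>0$, as the lemma is stated, the chain of inequalities is actually false --- for instance, with $s=\frac{1}{2}$ and $k=2$ the middle quantity is $1-(1-2)^2 = 0$ while $1-e^{-k/s} = 1-e^{-4} > 0$, so the second inequality fails. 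This is a harmless overstatement in the paper's transcription of Wolsey's result rather than a gap in your argument: in the only place the lemma is invoked, $s = k/\tau \ge k \ge 1$, so your proof covers the application in full. One further pedantic point: if every $\rho_i = 0$ the displayed ratio is $0/0$; your formulation $P_k \ge M\bigl(1-(1-\frac{1}{s})^k\bigr)$ is the correct way to read the claim and sidesteps this degeneracy.
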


Next, let $Q^*_a$ be the optimal offline solution on assignment $a$, $\tau=\min_{i,j} p_{i,j}$, and $(Q^t,b^t)$ be the collection $Q^t$ given by the Adaptive Greedy algorithm at step $t$ corresponding to the partial assignment $b^t$. Then:

\begin{lemma}
For $t=1,2,\ldots,k$,
\[ \E_a\left[g(Q^*_a,a)\right] \le \E_{a}\left[g(Q^{t-1},a)\right]+\frac{k}{\tau}\cdot\E_a\left[g(Q^t,a)-g(Q^{t-1},a)\right] \]
\label{lem:Wolsey3.5OPT}
\end{lemma}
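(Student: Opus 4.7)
The plan mirrors the classical Nemhauser--Wolsey--Fisher telescoping argument, with $\tau$ appearing exactly to absorb the correlation between the offline optimum $Q^*_a$ and the random state $a$. Fix $t \in \{1,\ldots,k\}$. Applying monotonicity and then submodularity pointwise in $a$ gives
\[ g(Q^*_a, a) \le g(Q^{t-1} \cup Q^*_a, a) \le g(Q^{t-1}, a) + \sum_{j \in Q^*_a \setminus Q^{t-1}} g_{t-1, a}(j). \]
Taking expectation over $a$ and rearranging, the lemma reduces to proving
\[ \E_a\!\left[\sum_{j \in Q^*_a \setminus Q^{t-1}} g_{t-1,a}(j)\right] \le \frac{k}{\tau}\cdot\E_a\bigl[g(Q^t,a) - g(Q^{t-1},a)\bigr]. \]

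To prove this, I would rewrite the sum as $\sum_{j \in N \setminus Q^{t-1}} \mathbf{1}[j \in Q^*_a]\, g_{t-1,a}(j)$ and condition on the greedy's partial assignment $b^{t-1}$, which determines both $Q^{t-1}$ and $i_t$. For a fixed $j \notin Q^{t-1}$, the crucial observation is that, given $b^{t-1}$, the marginal $g_{t-1,a}(j)$ depends only on $a_j$, whereas $\mathbf{1}[j \in Q^*_a]$ may depend on all of $a$. Setting $\alpha(s) := \Pr_{a \succ b^{t-1}}[j \in Q^*_a \mid a_j = s]$ and conditioning on $a_j$ yields
\[ \E_{a \succ b^{t-1}}\bigl[\mathbf{1}[j \in Q^*_a]\, g_{t-1,a}(j)\bigr] \;\le\; \max_s \alpha(s) \cdot \E_{a \succ b^{t-1}}[g_{t-1,a}(j)]. \]
The decoupling step---where $\tau$ enters---uses that $p_{j,s} \ge \tau$ for every $s$, so
\[ \Pr_{a \succ b^{t-1}}[j \in Q^*_a] \;=\; \sum_s p_{j,s}\alpha(s) \;\ge\; \tau \cdot \max_s \alpha(s). \]
Combining this with the greedy optimality $\E_{a \succ b^{t-1}}[g_{t-1,a}(j)] \le \E_{a \succ b^{t-1}}[g_{t-1,a}(i_t)] = \E_{a \succ b^{t-1}}[g(Q^t,a) - g(Q^{t-1},a)]$ gives the per-$j$ bound
\[ \E_{a \succ b^{t-1}}\bigl[\mathbf{1}[j \in Q^*_a]\, g_{t-1,a}(j)\bigr] \;\le\; \tfrac{1}{\tau}\,\Pr_{a \succ b^{t-1}}[j \in Q^*_a]\cdot\E_{a \succ b^{t-1}}\bigl[g(Q^t,a)-g(Q^{t-1},a)\bigr]. \]

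Summing this inequality over $j \in N \setminus Q^{t-1}$, using $\sum_{j} \Pr_{a \succ b^{t-1}}[j \in Q^*_a] = \E_{a \succ b^{t-1}}[|Q^*_a \setminus Q^{t-1}|] \le k$, and finally taking the outer expectation over $b^{t-1}$ finishes the proof. The main obstacle is precisely the decoupling step: both $\mathbf{1}[j \in Q^*_a]$ and $g_{t-1,a}(j)$ depend on $a_j$, so the offline optimum could in principle pick $j$ exactly in those states where its marginal gain is largest, which would defeat a direct greedy comparison. The uniform lower bound $p_{j,s} \ge \tau$ on the state probabilities is what prevents such a conspiracy, and its price is the factor $\tfrac{1}{\tau}$---which, as Proposition~\ref{thm:adapttau} indicates, is essentially unavoidable.
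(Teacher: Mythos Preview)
Your proof is correct and follows essentially the same route as the paper: reduce via monotonicity and submodularity to bounding $\E_a\bigl[\sum_{j \in Q^*_a \setminus Q^{t-1}} g_{t-1,a}(j)\bigr]$, then exploit that (given $b^{t-1}$) the marginal $g_{t-1,a}(j)$ depends only on $a_j$ together with $p_{j,s}\ge\tau$ and greedy optimality to compare against the gain from $i_t$. The only difference is packaging: the paper obtains the pointwise bound $g_{t-1,a'}(j)\le\frac{1}{\tau}\,\E_{a\succ b^{t-1}}[g_{t-1,a}(i_t)]$ for each fixed $a'$ and then sums over $|Q^*_{a'}\setminus Q^{t-1}|\le k$ terms, whereas you carry the indicator $\mathbf{1}[j\in Q^*_a]$ through and decouple via $\alpha(s)$ before summing---but the content is identical.
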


\begin{proof}
Suppose Greedy chooses $i_t$ at step $t$, that is, $Q^t\setminus Q^{t-1}=\{i_t\}$. By definition we have
\[ \E_{a'\succ b^{t-1}}\left[g_{t-1,a'}(i_t)\right] = \sum_{m=1}^\ell p_{i_t,m}\left[g_{t-1,b^{t-1}_{i_t\gets m}}(i_t)\right]. \]

Next, suppose $j\in Q^*_{a'}\setminus Q^{t-1}$ where $a'$ is a full assignment and $a'\succ b^{t-1}$. We can write $\Pr[a']=\prod_i \psi_i$ where $\psi_i\in\{p_{i,0},\ldots,p_{i,\ell-1}\}$. If we then let $a'_{j\gets m}$ be the assignment $a'$ with the $j$-th bit set to $m$, we have

\begin{align*}
g_{t-1,a'}(j)					& \le \frac{1}{\psi_j}\left(\sum_{m=0}^{\ell-1} p_{j,m}\left[g_{t-1,a'_{j\gets m}}(j)\right]\right) 
							\le \frac{1}{\tau} \left(\sum_{m=0}^{\ell-1} p_{i_t,m}\left[g_{t-1,b^{t-1}_{i_t\gets m}}(i_t)\right]\right) 
							= \frac{1}{\tau}\cdot\E_{a'\succ b^{t-1}}\left[g_{t-1,a'}(i_t)\right]
\end{align*}
where the first inequality follows from the fact that $\psi_j=p_{j,m}$ for some $m$, and the second from the definitions of $\tau$ and $i_t$.

Then since $\abs{Q^*_{a'}\setminus Q^{t-1}}\le k$ we get
\begin{equation}\label{eq:k/tau}
\sum_{j\in Q^*_{a'}\setminus Q^{t-1}} g_{t-1,a'}(j) \le \frac{k}{\tau} \cdot \E_{a'\succ b^{t-1}}\left[g_{t,a'}(i_t)\right].
\end{equation}

Next, we have
\begin{align*}
\E_a\left[g(Q^*_a,a)\right] 	& = \E_{b^{t-1}}\left[ \E_{{a'}\succ b^{t-1}} \left[g(Q^*_{a'},{a'})\right]\right] \\
				& \le \E_{b^{t-1}}\left[ \E_{{a'}\succ b^{t-1}} \left[g(Q^*_{a'} \cup Q^{t-1},{a'}) \right]\right] \\
				& \le \E_{b^{t-1}}\left[ \E_{{a'}\succ b^{t-1}} \left[g(Q^{t-1},{a'})+\sum_{j\in Q^*_{a'}\setminus Q^{t-1}} g_{t-1,a'}(j) \right] \right] \\
				& \le \E_{a}\left[g(Q^{t-1},a)\right]+\E_{b^{t-1}}\left[\frac{k}{\tau}\cdot\E_{{a'}\succ b^{t-1}}\left[g_{t-1,a'}(i_t)\right] \right] \\
				& = \E_{a}\left[g(Q^{t-1},a)\right] + \frac{k}{\tau}\cdot\E_a\left[g_{t-1,a}(i_t)\right]
\end{align*}
where the first inequality follows from monotonicity, the second from submodularity, and the third from \eqref{eq:k/tau}.
\end{proof}

In light of Lemma~\ref{lem:1-1/eTech} we see that Theorem~\ref{thm:1-1/etau} follows from Lemma~\ref{lem:Wolsey3.5OPT}:

\begin{proof}[Proof of Theorem~\ref{thm:1-1/etau}]
Let $\rho_i=\E_a[g(Q^i,a)-g(Q^{i-1},a)]$, then clearly we have $\E_a[g(Q^{t-1},a)]=\sum_{i=1}^{t-1} \rho_i$. Since the Adaptive Greedy algorithm outputs $Q^k$,
\begin{align*}
\frac{\AGREEDY}{\OFFLINE}	& = \frac{\sum_{i=1}^k \rho_i}{\E_a[g(Q^*_a,a)]} 
									\ge \frac{\sum_{i=1}^k \rho_i}{\min_{t\in\{1,\ldots,k\}}\left(\frac{k}{\tau}\rho_t+\sum_{i=1}^{t-1} \rho_i\right)} 
									 \ge 1-\frac{1}{e^\tau}
\end{align*}
as desired.
\end{proof}

\section{Gaps for Stochastic Max Coverage}

In this section we consider the special case of Stochastic Max Coverage. We still achieve an adaptivity gap of $\frac{e}{e-1}$, which is tight by an example given by Asadpour and Nazerzadeh~\cite[\S3.1]{asadpour2014maximizing}. Furthermore, we obtain a bound of $\frac{e}{\tau(e-1)}$ for $\frac{\OFFLINE}{\NONADAPT}$.

We use of the following fact, which is easily seen by inspection or by calculus.

\begin{lemma}
For all $x\ge 0$, $1-(1-\frac{x}{k})^k\ge x(1-\frac{1}{e})$.
\end{lemma}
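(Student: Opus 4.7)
The plan is to exploit concavity of $f(x) := 1-(1-x/k)^k$. A direct second-derivative computation gives $f''(x) = -\tfrac{k-1}{k}(1-x/k)^{k-2}$, which is nonpositive on $[0,k]$, so $f$ is concave there. The right-hand side $g(x) := x(1-\tfrac{1}{e})$ is linear, so it suffices to check $f \ge g$ at two endpoints of a suitable interval and then interpolate using concavity.

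The natural endpoints are $x=0$ and $x=1$: at $x=0$ both sides vanish, and at $x=1$ the classical inequality $(1-1/k)^k \le 1/e$ gives
\[ f(1) = 1-(1-1/k)^k \ge 1-\tfrac{1}{e} = g(1). \]
Concavity of $f$ together with $f(0)=0$ then yields, for every $x\in[0,1]$,
\[ f(x) \ge (1-x)f(0) + x f(1) = x f(1) \ge x\left(1-\tfrac{1}{e}\right), \]
which is the claimed inequality.

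Two remarks. First, the statement as written says ``for all $x\ge 0$,'' but the right-hand side exceeds $1$ once $x > e/(e-1)$ while $f(x)\le 1$, so the inequality is really meant on a bounded range; the concavity argument above delivers it on $[0,1]$, which is precisely the regime needed for the stochastic max coverage application (where $x$ will be a probability-like quantity). Second, I do not anticipate a real obstacle: the only substantive ingredient is the elementary bound $(1-1/k)^k \le 1/e$, and the rest is routine linear interpolation under concavity. If one prefers a calculus-free argument, concavity of $f$ can alternatively be seen by writing $f$ as the composition of the linear map $x\mapsto 1-x/k$ with the concave map $y\mapsto 1-y^k$ on $[0,1]$.
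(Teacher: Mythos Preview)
Your proof is correct. The paper itself gives no argument beyond the remark that the inequality ``is easily seen by inspection or by calculus,'' so your concavity interpolation between $x=0$ and $x=1$, together with the classical bound $(1-1/k)^k\le 1/e$, supplies the details cleanly and is exactly the kind of calculus argument the paper is gesturing at.

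Your second remark is also on point: the hypothesis ``for all $x\ge 0$'' is overstated, since the right-hand side exceeds $1$ once $x>e/(e-1)$ while the left-hand side is at most $1$. The inequality should be read on $[0,1]$, which is precisely the range used in the applications (for instance $x=\E_a[y_{j,a}]\in[0,1]$ in the proof of the second max-coverage theorem).
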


We give a combinatorial proof of the following result, which is adapted from 
work of
Ageev and Sviridenko on the deterministic version of the problem~\cite{ageev1999approximation,ageev2004pipage}.

\begin{theorem}
For the Stochastic Max Coverage problem, there exists a non-adaptive solution that achieves coverage $(1-\frac{1}{e})\ADAPT$.
\label{thm:uniformnbr}
\end{theorem}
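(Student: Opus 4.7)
The plan is to follow the Ageev--Sviridenko pipage-rounding template: read off a fractional ``marginal'' solution from the optimal adaptive policy, argue it covers at least a $(1-\frac{1}{e})$-fraction of \ADAPT, and then round it to an integral non-adaptive solution without loss.

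First I would extract the fractional solution. Fix any optimal adaptive policy and let $Q_A$ denote the random set of items it tests, so $\abs{Q_A}\le k$. The key observation is that $\{i\in Q_A\}$ is independent of $b_i$: since item states are independent and the policy can only decide to test $i$ on the basis of outcomes of strictly earlier tests, which involve only the states of items other than $i$, the event $\{i\in Q_A\}$ depends only on those other states. Setting $y_i=\Pr[i\in Q_A]$ and $q_{ij}=\Pr[e_j\in S_{i,b_i}]=\sum_{r\,:\,e_j\in S_{i,r}} p_{i,r}$, this independence gives $\Pr[i\in Q_A\text{ and }e_j\in S_{i,b_i}]=y_i q_{ij}$, so a union bound over $i$ yields
\[
\ADAPT \;=\; \sum_j \Pr[e_j\text{ covered by }Q_A] \;\le\; \sum_j \min\bigl(1,\,\textstyle\sum_i y_i q_{ij}\bigr),
\]
while $\sum_i y_i = \E[\abs{Q_A}] \le k$.

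Next I would pass to the multilinear relaxation $\hat F(y) = \sum_j \bigl(1-\prod_i(1-y_i q_{ij})\bigr)$. Using $1-t\le e^{-t}$ coordinatewise gives $\prod_i(1-y_i q_{ij})\le e^{-\sigma_j}$ for $\sigma_j=\sum_i y_i q_{ij}$, hence $\hat F_j(y)\ge 1-e^{-\sigma_j}$. Since $1-e^{-x}\ge (1-\frac{1}{e})\min(1,x)$ for all $x\ge 0$ (a standard consequence of the concavity of $1-e^{-x}$, in the same spirit as the lemma just stated in the excerpt), summing over $j$ gives $\hat F(y)\ge (1-\frac{1}{e})\ADAPT$.

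Finally I would round $y$ to an integral $y^\ast\in\{0,1\}^n$ with $\sum_i y^\ast_i\le k$ by pipage rounding: repeatedly pick two fractional coordinates $y_i,y_{i'}$ and shift mass along $(y_i+\delta,\,y_{i'}-\delta)$ until one of them saturates at $0$ or $1$. For each $j$, $\hat F_j$ depends on $\delta$ only through the factor $(1-(y_i+\delta)q_{ij})(1-(y_{i'}-\delta)q_{i'j})$, whose $\delta^2$-coefficient is $+q_{ij}q_{i'j}\ge 0$, so $\hat F$ is convex on the segment and its maximum is attained at an endpoint; hence one of the two endpoints weakly increases $\hat F$. After at most $n$ such moves we obtain an integral $y^\ast$ with $\hat F(y^\ast)\ge \hat F(y)$, and the deterministic set $Q=\{i:y^\ast_i=1\}$ is a non-adaptive solution of size at most $k$ whose expected coverage of $e_j$ equals $1-\prod_{i\in Q}(1-q_{ij})=\hat F_j(y^\ast)$; summing over $j$ yields the claimed $(1-\frac{1}{e})\ADAPT$ bound. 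The main conceptual step is the independence observation above, which converts the complicated joint distribution of the adaptive policy's choices and observed states into a usable vector of marginals $(y_i,q_{ij})$; once that is in hand, the rest is the standard Ageev--Sviridenko machinery, with only the convexity check for $\hat F$ on each two-coordinate pipage line requiring any real computation.
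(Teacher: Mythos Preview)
Your argument is correct and in the same Ageev--Sviridenko spirit the paper invokes, but the execution differs. Both proofs rest on the same key observation---that whether an adaptive policy tests item $i$ is independent of $b_i$; the paper calls this the \emph{neighbor property} and uses it to define an intermediate quantity $\NBR$ (the best offline-type solution subject to that constraint, so $\ADAPT\le\NBR$), while you use it directly to factor $\Pr[i\in Q_A,\,e_j\in S_{i,b_i}]=y_iq_{ij}$. From there the paths diverge: the paper does randomized rounding (sample $k$ items with probabilities proportional to the marginals of the $\NBR$ solution) followed by a case analysis on ``promising cover'' types, written out in full only for the uniform binary case; you pass to the multilinear extension $\hat F$ and pipage-round deterministically. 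Your route is cleaner, handles arbitrary product distributions and $\ell>2$ states uniformly without the case split, and produces an explicit non-adaptive set rather than an existence argument via averaging.

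Two small nits. First, a sign slip: the $\delta^2$-coefficient of the factor $(1-(y_i+\delta)q_{ij})(1-(y_{i'}-\delta)q_{i'j})$ is $-q_{ij}q_{i'j}\le 0$, not $+q_{ij}q_{i'j}$; your conclusion that $\hat F_j$ is convex in $\delta$ is still correct, since $\hat F_j=1-C\cdot(\text{that factor})$ with $C=\prod_{l\ne i,i'}(1-y_lq_{lj})\ge 0$. Second, pipage rounding preserves $\sum_i y_i$, which need not be an integer; before rounding you should raise some coordinates (which only increases $\hat F$, by monotonicity) until $\sum_i y_i=k$, and then each pipage step makes at least one more coordinate integral.
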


\begin{proof}
For simplicity, we give the analysis for the uniform distribution over binary states; the analysis is similar for product distributions over $\ell>2$ states.

We make use of the \emph{neighbor property}, which every adaptive algorithm satisfies by definition: given two assignments $a,a'$ differing only in bit $j$, either $b_j$ is tested for both assignments, or for neither. We denote the expected value of the optimal offline solution satisfying the neighbor property by \NBR. Clearly, $\NONADAPT\le\ADAPT\le\NBR\le\OFFLINE$. We will prove that $\NONADAPT\ge(1-\frac{1}{e})\NBR$, thus achieving the desired adaptivity gap. Let $\mathcal{X}$ 
denote the procedure giving the optimal offline solution satisfying the neighbor property.

For each $i\in N$ and assignment $a$ we assign a variable $x_{i,a}$ such that $x_{i,a}=1$ if and only if $S_{i,a}$ is included in $Q_a$, the subcollection given by $\mathcal{X}$. Correspondingly we assign to each $e_j\in E$ a variable $y_{j,a}$ such that $y_{j,a}=1$ if and only if $j\in\bigcup_{i\in Q_a} S_{i,a}$. Since $\NBR$ denotes the expected number of ground elements by the solution given by $\mathcal{X}$, $\NBR=\sum_j\E_a[y_{j,a}]$.

Consider the following (randomized) algorithm for producing a non-adaptive solution from the solution given by $\mathcal{X}$: randomly pick $k$ sets according to the following probability distribution: pick $i$ with probability $\frac{1}{k}\sum_{a}\Pr[a]\sum_{i=1}^n x_{i,a}=\frac{1}{k2^n}\sum_{a}\sum_{i=1}^n x_{i,a}$.

For each $e_j$ we say $i$ is a \emph{promising cover} for $j$ if either $e_j\in S_{i,0}$ or $e_j\in S_{i,1}$. We divide the promising covers into two categories: $i$ is of type $B$ if $e_j\in S_{i,0}\cap S_{i,1}$ and $i$ is of type $A$ otherwise. We abuse notation slightly and let $A=\frac{1}{2^n}\sum_{i\text{ of type }A}\sum_{a:e_j\in S_{i,a}} x_{i,a}$, and similarly $B=\frac{1}{2^n}\sum_{i\text{ of type }B}\sum_a x_{i,a}$. By definition $A+B\ge\E_a[y_{j,a}]$.

In expectation, the probability that a random $i$ produces a promising cover of type $A$ for $e_j$ is $\frac{2A}{k}$; hence the probability that, in expectation, at least one of the chosen $i$ is a promising cover of type $A$ for $e_j$ is at least $1-(1-\frac{2A}{k})^k\ge2A(1-\frac{1}{e})$. A promising cover of type $A$ covers $e_j$ with probability $\frac{1}{2}$, so $e_j$ is covered by at least one promising cover of type $A$ with probability at least $A(1-\frac{1}{e})$.

Similarly, picking a random set produces a promising cover of type $B$ for $j$ with probability $\frac{B}{k}$, so at least one of the chosen $i$ is a promising cover of type $B$ for $e_j$ is at least $1-(1-\frac{B}{k})^k\ge B(1-\frac{1}{e})$, and a promising cover of type $B$ covers $e_j$ with probability 1, so $e_j$ is covered by at least one promising cover of type $B$ with probability at least $B(1-\frac{1}{e})$.
Since the promising covers of type $A$ and $B$ are disjoint, we conclude that $e_j$ is covered with probability at least $(1-\frac{1}{e})(A+B)\ge(1-\frac{1}{e})\E_a[y_{j,a}]$.

Before finishing the proof, we note that for product distributions, we will actually have many more such categories: for each $i\in N$ we will have a different set of categories. Furthermore, when there are $\ell > 2$ states, there is a larger number of possible ways of covering each $e_j$ and hence a larger number of categories. However, the analysis will be similar, so that it will still the case that $e_j$ is covered with probability $\ge(1-\frac{1}{e})\E_a[y_{j,a}]$.

We return to the proof at hand. By linearity of expectation, the expected number of elements covered by this non-adaptive solution is equal to the sum of the probabilities that each $j$ is covered, in other words,
\begin{align*}
\E[\text{\# elements covered}] &= \sum_{j} \Pr[j\text{ is covered}] 
						\ge \left(1-\frac{1}{e}\right)\sum_{j}\E_a[y_{j,a}]=\left(1-\frac{1}{e}\right) \NBR.
\end{align*}
Since this randomized procedure achieves at least $(1-\frac{1}{e})\NBR$ in expectation, then there must exist a non-adaptive solution that achieves at least $(1-\frac{1}{e})\NBR\ge(1-\frac{1}{e})\ADAPT$, thus completing the proof.
\end{proof}

A similar analysis gives the following result:

\begin{theorem}
For the Stochastic Max Coverage problem, there exists a non-adaptive solution that achieves coverage $\tau(1-\frac{1}{e})\OFFLINE$.
\end{theorem}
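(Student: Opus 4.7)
The plan is to mimic the proof of Theorem~\ref{thm:uniformnbr}, using an optimal offline procedure in place of the neighbor-property procedure $\mathcal{X}$: for each assignment $a$, let $Q^*_a$ be an optimal offline subcollection of size $k$, set $x_{i,a}=1$ iff $i\in Q^*_a$, and $y_{j,a}=1$ iff $e_j$ is covered by $Q^*_a$ on realization $a$, so that $\OFFLINE=\sum_j \E_a[y_{j,a}]$. The randomized non-adaptive algorithm will draw $k$ items independently, each equal to $i$ with probability $\mu_i=\frac{1}{k}\E_a[x_{i,a}]$; these sum to $1$ since $\sum_i x_{i,a}=k$ for every $a$. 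For each element $e_j$, set $T_i=\{s\in\states : e_j\in S_{i,s}\}$. Since the $k$ draws are independent of $a$, the probability that a single draw picks an item that then covers $e_j$ under $a$ is $P_j = \sum_i \mu_i\cdot\Pr[a_i\in T_i]$.

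The key step is to relate $P_j$ to $\E_a[y_{j,a}]$. In Theorem~\ref{thm:uniformnbr} the neighbor property forced $x_{i,a}$ to be independent of $a_i$, giving a clean identity; for \OFFLINE this fails, because the offline procedure can choose to include $i$ only on realizations where $a_i$ takes a convenient value. However, for every $i$ and every state $s$ the bound $\Pr[E\mid F]\le\Pr[E]/\Pr[F]$ gives $\E_a[x_{i,a}\mid a_i=s]\le \E_a[x_{i,a}]/p_{i,s}\le \E_a[x_{i,a}]/\tau$, and so
\[
\E_a\bigl[x_{i,a}\cdot\mathbf{1}\{a_i\in T_i\}\bigr] \;=\; \sum_{s\in T_i} p_{i,s}\,\E_a[x_{i,a}\mid a_i=s] \;\le\; \frac{1}{\tau}\,\E_a[x_{i,a}]\cdot\Pr[a_i\in T_i].
\]
Combined with the union bound $\E_a[y_{j,a}]\le\sum_i \E_a[x_{i,a}\cdot\mathbf{1}\{a_i\in T_i\}]$, this yields $P_j\ge \frac{\tau}{k}\,\E_a[y_{j,a}]$.

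The remainder is routine. By independence of the $k$ draws, $\Pr[e_j\text{ not covered}]\le (1-P_j)^k$, and the preceding lemma gives $1-(1-P_j)^k\ge kP_j(1-\frac{1}{e})\ge \tau(1-\frac{1}{e})\E_a[y_{j,a}]$, with the case $kP_j>1$ handled by observing that the coverage probability is then at least $1-\frac{1}{e}\ge \tau(1-\frac{1}{e})\E_a[y_{j,a}]$ since $\E_a[y_{j,a}]\le 1$. Summing over $j$ by linearity and derandomizing the selection produces a deterministic non-adaptive choice of at most $k$ items with expected coverage $\ge \tau(1-\frac{1}{e})\OFFLINE$; for $\ell>2$ states and arbitrary product distributions the same argument applies verbatim with the same definition of $T_i$. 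The main obstacle is the displayed $1/\tau$ inequality: without the neighbor property the offline solution can correlate the indicator $x_{i,a}$ with $a_i$, and the loss from conditional probability is precisely where the $\tau$ factor in the final bound comes from.
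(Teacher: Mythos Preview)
Your argument has a genuine gap at the step ``By independence of the $k$ draws, $\Pr[e_j\text{ not covered}]\le (1-P_j)^k$.'' The $k$ draws are independent of each other, but the $k$ events ``draw $t$ covers $e_j$'' are \emph{not} independent: they all depend on the same realization $a$. For a fixed $a$, the non-coverage probability is $(1-q_j(a))^k$ where $q_j(a)=\sum_{i:a_i\in T_i}\mu_i$, so $\Pr[e_j\text{ not covered}]=\E_a[(1-q_j(a))^k]$. Since $x\mapsto(1-x)^k$ is convex, Jensen gives $\E_a[(1-q_j(a))^k]\ge(1-\E_a[q_j(a)])^k=(1-P_j)^k$, which is the \emph{opposite} inequality to the one you need. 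A concrete instance: take $n=1$, $k=2$, $\mu_1=1$, and $\Pr[a_1\in T_1]=\tfrac12$; then $P_j=\tfrac12$ and $(1-P_j)^2=\tfrac14$, but the true non-coverage probability is $\tfrac12$.

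The paper's proof avoids this by decoupling the two sources of randomness. It calls $i$ a \emph{promising cover} for $e_j$ when $T_i\neq\varnothing$, a notion that depends only on the instance, not on $a$. Then ``a single draw picks a promising cover'' has probability $\sum_{i:T_i\neq\varnothing}\mu_i\ge\frac{1}{k}\sum_i\E_a[x_{i,a}\mathbf{1}\{a_i\in T_i\}]\ge\frac{1}{k}\E_a[y_{j,a}]$, and these events \emph{are} independent across the $k$ draws, giving $\Pr[\text{some draw is promising}]\ge(1-\tfrac{1}{e})\E_a[y_{j,a}]$ legitimately. The factor $\tau$ then enters at a different point: conditional on at least one promising cover $i$ being selected, the (independent) state $a_i$ lands in $T_i$ with probability at least $\tau$. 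So the $\tau$ loss in the paper comes from ``a promising item actually covers,'' not from your conditional-probability bound on the correlation between $x_{i,a}$ and $a_i$; in particular your displayed $1/\tau$ inequality, while correct, is not where the $\tau$ is needed.
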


\begin{proof}
The proof is similar to the proof of Theorem~\ref{thm:uniformnbr}. Again we use variables $x_{i,a}$ such that $x_{i,a}=1$ if and only if $S_{i,a}$ is included in the optimal subcollection $Q_a$ for $a$, and $y_{j,a}$ such that $y_{j,a}=1$ if and only if $e_j\in\bigcup_{i\in Q_a} S_{i,a}$. Clearly $\sum_{i:e_j\in S_{i,a}} x_{i,a} \ge y_{j,a}$ for each assignment $a$ and $\OFFLINE=\sum_j \E_a[y_{j,a}]$.

Again we randomly pick $k$ sets according to the following probability distribution: pick $i$ with probability $\frac{1}{k}\sum_a \Pr[a]\sum_{i=1}^n x_{i,a}$.

In expectation, the probability that picking a random $i$ produces a promising cover for $e_j$ is
\[ \frac{1}{k}\sum_a \Pr[a]\sum_{i:e_j\in S_{i,a}} x_{i,a} \ge \frac{1}{k}\sum_a \Pr[a]y_{j,a}=\frac{\E_a[y_{j,a}]}{k} \]
hence the probability that at least one promising cover for $j$ is chosen is
\[ 1-\left(1-\frac{\E_a[y_{j,a}]}{k}\right)^k\ge\left(1-\frac{1}{e}\right)\E_a[y_{j,a}]. \]
Since each promising cover covers $j$ with probability $\ge\tau$, the probability that $j$ is actually covered is at least $\tau(1-\frac{1}{e})\E_a[y_{j,a}]$. The rest of the proof follows from analysis similar to the analysis used in the proof of Theorem~\ref{thm:uniformnbr}.
\end{proof}

\section*{Acknowledgements}

Patrick Lin was partially supported by NSF Grants 1217968 and 1319648.
Devorah Kletenik and Lisa Hellerstein were partially supported by NSF Grants 1217968 and 0917153.


\appendix

%
%
%
%


\bibliographystyle{plain}
\bibliography{stochastic-coverage}

\section{Appendix: Proof of Proposition~\ref{thm:adapttau}}

We construct a counterexample for which $\ADAPT \approx (\tau-\eps)\OFFLINE$.

\begin{example}
Consider an instance of the stochastic submodular coverage with a cardinality constraint problem as follows: say $\ell=2$ (so this is over binary states), let $t = \frac{1}{\tau}$ and set $n = 1 + t^2$. Let $p_{i,1} = \tau$ and $p_{i,0} = 1-\tau$ for all $i \in N$. Let $B=1$, so the problem is to maximize the expected utility of picking a single bit.

Consider the following stochastic monotone submodular function $g$ defined over $\states$: for all subcollections $Q \subseteq N$ such that $i \notin Q$, let
\[
g_{Q,a}(i) =	\begin{cases}
			1 & \text{if $i =1$}\\
			t-\eps & \text{if $i \in \{2, \ldots, n\}$ and $a_i=1$} \\
			0 & \text{if $i \in \{2, \ldots, n\}$ and $a_i=0$}
			\end{cases}
\]
where the monotonicity and submodularity of $g$ follow from the fact that $g$ is an additive utility function.

For $i \notin Q$, $\E[g_{Q,b}(i)] = 1$ if $i = 1 $ and $\E[g_{Q,b}(i)] = \frac{t-\eps}{t}$ if $i \in \{2,\ldots,n\}$. Due to the low probability of any variable from the second group having a value of 1, no adaptive tree outperforms the greedy choice of picking the first  variable. Thus $\ADAPT=1$.

The optimal offline procedure, on the other hand, will pick any variable of the second group that has a value of 1. With probability $1-(1- \frac{1}{t})^{t^2}$, at least one variable of the second group will evaluate to 1; hence, $\OFFLINE = \left(1-(1- \frac{1}{t})^{t^2}\right) \cdot (t-\eps) + (1- \frac{1}{t})^{t^2} \cdot 1$.

It follows that for sufficiently large $t$, $\ADAPT \approx (\tau - \eps)\OFFLINE$.
\qed
\end{example}

\end{document}